\definecolor{grn}{RGB}{0,127,0}
\definecolor{blk}{RGB}{63,63,63}
\theoremstyle{definition}
\newtheorem{lemma}{Lemma}
\theoremstyle{remark}
\newcommand*{\mybox}[1]{%
  \framebox{\raisebox{0cm}[0.5\baselineskip][0.05\baselineskip]{%
    \hbox to 0.1cm {\hss#1\hss}}}\hspace{0.05cm}}
\newcommand{\mystack}[1]{\begin{array}{|r|||}\hline#1\\\hline\end{array}}
\begin{document}
\title{An Improved Physical ZKP for Nonogram and Nonogram Color}
\author[1]{Suthee Ruangwises\thanks{\texttt{ruangwises@gmail.com}}}
\affil[1]{Department of Informatics, The University of Electro-Communications, Tokyo, Japan}
\date{}
\maketitle

\begin{abstract}
Nonogram is a pencil puzzle consisting of a rectangular white grid where the player has to paint some cells black according to given constraints. In 2010, Chien and Hon constructed a physical card-based zero-knowledge proof protocol for Nonogram, which enables a prover to physically show that he/she knows a solution of the puzzle without revealing it. However, their protocol requires special tools such as scratch-off cards and a sealing machine, making it impractical to implement in real world. The protocol also has a nonzero soundness error. In this paper, we develop a more practical card-based protocol for Nonogram with perfect soundness that uses only regular paper cards. We also show how to modify our protocol to make it support Nonogram Color, a generalization of Nonogram where the player has to paint the cells with multiple colors.

\textbf{Keywords:} zero-knowledge proof, card-based cryptography, Nonogram, puzzle
\end{abstract}

\section{Introduction}
\textit{Nonogram} (also known as \textit{Picross}, \textit{Pic-a-Pix}, \textit{Griddlers}, or \textit{Hanjie}) is one of the world's most popular pencil puzzles alongside Sudoku, Numberlink, and other puzzles. Recently, many Nonogram mobile apps with various names have been developed \cite{google}.

A Nonogram puzzle consists of a rectangular white grid of size $m \times n$. The player has to paint some cells black according to the sequences of positive integers assigned to all rows and columns. Suppose a sequence $(x_1,x_2,\dots,x_k)$ is assigned to a row (resp. column), then that row (resp. column) must contain exactly $k$ blocks of consecutive black cells with lengths $x_1,x_2,\dots,x_k$ in this order from left to right (resp. from top to bottom), with at least one white cell separating adjacent blocks. For example, in Fig. \ref{fig0}, the leftmost column has a sequence $(5,2)$ assigned to it, so it must contain a block of five consecutive black cells, followed by a block of three consecutive black cells to the bottom of it, separated by at least one white cell.

\textit{Nonogram Color}, or \textit{Multicolor Nonogram}, is a generalization of Nonogram. It is also a popular puzzle with many mobile apps \cite{google2}. In Nonogram Color, the given numbers have several colors instead of just black, and the block of $x_i$ consecutive cells corresponding to the number $x_i$ must have the same color as the number $x_i$.

It is important to note that the key difference from Nonogram is that in Nonogram Color, blocks with different colors \textit{can} touch, i.e. adjacent blocks with different colors can be right next to each other, while adjacent blocks with the same color must still be separated by at least one white cell like in Nonogram. For example, in Fig. \ref{fig00}, the fifth topmost row has a sequence $(3,3,1)$ (with colors red, green, and green, respectively) assigned to it, so it must contain a block of three consecutive red cells, followed by a block of three consecutive green cells to the right of it, and then a block of one green cell to the right of it; the first and second blocks can be right next to each other, but the second and third blocks must be separated by at least one white cell.

Determining whether a given Nonogram puzzle has a solution has been proved to be NP-complete \cite{np}. As Nonogram Color is a generalization of Nonogram, it is also an NP-complete problem.

Suppose Patricia, an expert in Nonogram, constructed a difficult Nonogram puzzle and challenged her friend Victor to solve it. After several tries, Victor could not solve her puzzle and doubted whether it actually has a solution. Patricia wants to convince him that her puzzle indeed has a solution without revealing it to him, as this would render the challenge pointless. In this situation, Patricia needs some kind of \textit{zero-knowledge proof (ZKP)} protocol.

\begin{figure}
\centering
\begin{tikzpicture}
\draw[step=0.4cm] (0,0) grid (4,4);

\draw[line width=0.5mm] (-0.9,0) -- (4,0);
\draw[line width=0.5mm] (-0.9,4) -- (4,4);
\draw[line width=0.5mm] (-0.9,0) -- (-0.9,4);
\draw (0,0.4) -- (-0.9,0.4);
\draw (0,0.8) -- (-0.9,0.8);
\draw (0,1.2) -- (-0.9,1.2);
\draw (0,1.6) -- (-0.9,1.6);
\draw (0,2.0) -- (-0.9,2.0);
\draw (0,2.4) -- (-0.9,2.4);
\draw (0,2.8) -- (-0.9,2.8);
\draw (0,3.2) -- (-0.9,3.2);
\draw (0,3.6) -- (-0.9,3.6);
\node at (-0.15,0.2) {2};
\node at (-0.45,0.2) {3};
\node at (-0.15,0.6) {3};
\node at (-0.15,1.0) {1};
\node at (-0.45,1.0) {3};
\node at (-0.15,1.4) {1};
\node at (-0.45,1.4) {2};
\node at (-0.15,1.8) {3};
\node at (-0.45,1.8) {4};
\node at (-0.15,2.2) {1};
\node at (-0.45,2.2) {3};
\node at (-0.15,2.6) {1};
\node at (-0.45,2.6) {2};
\node at (-0.75,2.6) {3};
\node at (-0.15,3.0) {1};
\node at (-0.45,3.0) {2};
\node at (-0.75,3.0) {2};
\node at (-0.15,3.4) {2};
\node at (-0.45,3.4) {4};
\node at (-0.75,3.4) {2};
\node at (-0.15,3.8) {2};
\node at (-0.45,3.8) {3};

\draw[line width=0.5mm] (0,0) -- (0,5.2);
\draw[line width=0.5mm] (4,0) -- (4,5.2);
\draw[line width=0.5mm] (0,5.2) -- (4,5.2);
\draw (0.4,4) -- (0.4,5.2);
\draw (0.8,4) -- (0.8,5.2);
\draw (1.2,4) -- (1.2,5.2);
\draw (1.6,4) -- (1.6,5.2);
\draw (2.0,4) -- (2.0,5.2);
\draw (2.4,4) -- (2.4,5.2);
\draw (2.8,4) -- (2.8,5.2);
\draw (3.2,4) -- (3.2,5.2);
\draw (3.6,4) -- (3.6,5.2);
\node at (0.2,4.2) {3};
\node at (0.2,4.6) {5};
\node at (0.6,4.2) {3};
\node at (0.6,4.6) {5};
\node at (1.0,4.2) {7};
\node at (1.4,4.2) {2};
\node at (1.4,4.6) {1};
\node at (1.8,4.2) {2};
\node at (2.2,4.2) {4};
\node at (2.6,4.2) {4};
\node at (3.0,4.2) {1};
\node at (3.0,4.6) {1};
\node at (3.0,5) {1};
\node at (3.4,4.2) {1};
\node at (3.4,4.6) {1};
\node at (3.4,5) {2};
\node at (3.8,4.2) {7};
\end{tikzpicture}
\hspace{1.2cm}
\begin{tikzpicture}
\filldraw[draw=blk,fill=blk] (0,0) rectangle (1.2,1.2);
\filldraw[draw=blk,fill=blk] (0.8,1.2) rectangle (1.6,2);
\filldraw[draw=blk,fill=blk] (0,1.6) rectangle (1.2,2.8);
\filldraw[draw=blk,fill=blk] (0,2.8) rectangle (0.8,3.6);
\filldraw[draw=blk,fill=blk] (2,2.4) rectangle (2.8,4);
\filldraw[draw=blk,fill=blk] (1.6,3.2) rectangle (2,4);
\filldraw[draw=blk,fill=blk] (1.2,3.2) rectangle (1.6,3.6);
\filldraw[draw=blk,fill=blk] (2.8,0) rectangle (3.6,0.4);
\filldraw[draw=blk,fill=blk] (2.8,0.8) rectangle (3.2,1.2);
\filldraw[draw=blk,fill=blk] (3.6,1.2) rectangle (4,4);
\filldraw[draw=blk,fill=blk] (2.8,1.6) rectangle (3.6,2);
\filldraw[draw=blk,fill=blk] (3.2,3.2) rectangle (3.6,4);

\draw[step=0.4cm] (0,0) grid (4,4);

\draw[line width=0.5mm] (-0.9,0) -- (4,0);
\draw[line width=0.5mm] (-0.9,4) -- (4,4);
\draw[line width=0.5mm] (-0.9,0) -- (-0.9,4);
\draw (0,0.4) -- (-0.9,0.4);
\draw (0,0.8) -- (-0.9,0.8);
\draw (0,1.2) -- (-0.9,1.2);
\draw (0,1.6) -- (-0.9,1.6);
\draw (0,2.0) -- (-0.9,2.0);
\draw (0,2.4) -- (-0.9,2.4);
\draw (0,2.8) -- (-0.9,2.8);
\draw (0,3.2) -- (-0.9,3.2);
\draw (0,3.6) -- (-0.9,3.6);
\node at (-0.15,0.2) {2};
\node at (-0.45,0.2) {3};
\node at (-0.15,0.6) {3};
\node at (-0.15,1.0) {1};
\node at (-0.45,1.0) {3};
\node at (-0.15,1.4) {1};
\node at (-0.45,1.4) {2};
\node at (-0.15,1.8) {3};
\node at (-0.45,1.8) {4};
\node at (-0.15,2.2) {1};
\node at (-0.45,2.2) {3};
\node at (-0.15,2.6) {1};
\node at (-0.45,2.6) {2};
\node at (-0.75,2.6) {3};
\node at (-0.15,3.0) {1};
\node at (-0.45,3.0) {2};
\node at (-0.75,3.0) {2};
\node at (-0.15,3.4) {2};
\node at (-0.45,3.4) {4};
\node at (-0.75,3.4) {2};
\node at (-0.15,3.8) {2};
\node at (-0.45,3.8) {3};

\draw[line width=0.5mm] (0,0) -- (0,5.2);
\draw[line width=0.5mm] (4,0) -- (4,5.2);
\draw[line width=0.5mm] (0,5.2) -- (4,5.2);
\draw (0.4,4) -- (0.4,5.2);
\draw (0.8,4) -- (0.8,5.2);
\draw (1.2,4) -- (1.2,5.2);
\draw (1.6,4) -- (1.6,5.2);
\draw (2.0,4) -- (2.0,5.2);
\draw (2.4,4) -- (2.4,5.2);
\draw (2.8,4) -- (2.8,5.2);
\draw (3.2,4) -- (3.2,5.2);
\draw (3.6,4) -- (3.6,5.2);
\node at (0.2,4.2) {3};
\node at (0.2,4.6) {5};
\node at (0.6,4.2) {3};
\node at (0.6,4.6) {5};
\node at (1.0,4.2) {7};
\node at (1.4,4.2) {2};
\node at (1.4,4.6) {1};
\node at (1.8,4.2) {2};
\node at (2.2,4.2) {4};
\node at (2.6,4.2) {4};
\node at (3.0,4.2) {1};
\node at (3.0,4.6) {1};
\node at (3.0,5) {1};
\node at (3.4,4.2) {1};
\node at (3.4,4.6) {1};
\node at (3.4,5) {2};
\node at (3.8,4.2) {7};
\end{tikzpicture}
\caption{An example of a Nonogram puzzle (left) and its solution (right)}
\label{fig0}
\end{figure}

\begin{figure}
\centering
\begin{tikzpicture}
\draw[step=0.4cm] (0,0) grid (4,4);

\draw[line width=0.5mm] (-1.2,0) -- (4,0);
\draw[line width=0.5mm] (-1.2,4) -- (4,4);
\draw[line width=0.5mm] (-1.2,0) -- (-1.2,4);
\draw (0,0.4) -- (-1.2,0.4);
\draw (0,0.8) -- (-1.2,0.8);
\draw (0,1.2) -- (-1.2,1.2);
\draw (0,1.6) -- (-1.2,1.6);
\draw (0,2.0) -- (-1.2,2.0);
\draw (0,2.4) -- (-1.2,2.4);
\draw (0,2.8) -- (-1.2,2.8);
\draw (0,3.2) -- (-1.2,3.2);
\draw (0,3.6) -- (-1.2,3.6);
\node at (-0.15,0.2) {\textcolor{blue}{\textbf{1}}};
\node at (-0.45,0.2) {\textcolor{blue}{\textbf{1}}};
\node at (-0.75,0.2) {\textcolor{red}{4}};
\node at (-0.15,0.6) {\textcolor{blue}{\textbf{6}}};
\node at (-0.45,0.6) {\textcolor{red}{2}};
\node at (-0.15,1.0) {\textcolor{blue}{\textbf{1}}};
\node at (-0.45,1.0) {\textcolor{blue}{\textbf{1}}};
\node at (-0.75,1.0) {\textcolor{red}{1}};
\node at (-0.15,1.4) {\textcolor{blue}{\textbf{4}}};
\node at (-0.45,1.4) {\textcolor{red}{2}};
\node at (-0.15,1.8) {\textcolor{grn}{\textit{1}}};
\node at (-0.45,1.8) {\textcolor{red}{1}};
\node at (-0.75,1.8) {\textcolor{grn}{\textit{2}}};
\node at (-0.15,2.2) {\textcolor{grn}{\textit{1}}};
\node at (-0.45,2.2) {\textcolor{grn}{\textit{3}}};
\node at (-0.75,2.2) {\textcolor{red}{3}};
\node at (-0.15,2.6) {\textcolor{grn}{\textit{3}}};
\node at (-0.45,2.6) {\textcolor{red}{2}};
\node at (-0.75,2.6) {\textcolor{grn}{\textit{1}}};
\node at (-1.05,2.6) {\textcolor{red}{1}};
\node at (-0.15,3.0) {\textcolor{grn}{\textit{2}}};
\node at (-0.45,3.0) {\textcolor{grn}{\textit{1}}};
\node at (-0.75,3.0) {\textcolor{red}{2}};
\node at (-1.05,3.0) {\textcolor{grn}{\textit{3}}};
\node at (-0.15,3.4) {\textcolor{grn}{\textit{1}}};
\node at (-0.45,3.4) {\textcolor{grn}{\textit{3}}};
\node at (-0.75,3.4) {\textcolor{red}{1}};
\node at (-1.05,3.4) {\textcolor{grn}{\textit{2}}};
\node at (-0.15,3.8) {\textcolor{grn}{\textit{7}}};

\draw[line width=0.5mm] (0,0) -- (0,5.6);
\draw[line width=0.5mm] (4,0) -- (4,5.6);
\draw[line width=0.5mm] (0,5.6) -- (4,5.6);
\draw (0.4,4) -- (0.4,5.6);
\draw (0.8,4) -- (0.8,5.6);
\draw (1.2,4) -- (1.2,5.6);
\draw (1.6,4) -- (1.6,5.6);
\draw (2.0,4) -- (2.0,5.6);
\draw (2.4,4) -- (2.4,5.6);
\draw (2.8,4) -- (2.8,5.6);
\draw (3.2,4) -- (3.2,5.6);
\draw (3.6,4) -- (3.6,5.6);
\node at (0.2,4.2) {\textcolor{red}{1}};
\node at (0.2,4.6) {\textcolor{grn}{\textit{1}}};
\node at (0.2,5) {\textcolor{red}{2}};
\node at (0.2,5.4) {\textcolor{grn}{\textit{1}}};
\node at (0.6,4.2) {\textcolor{red}{4}};
\node at (0.6,4.6) {\textcolor{grn}{\textit{1}}};
\node at (0.6,5) {\textcolor{red}{1}};
\node at (0.6,5.4) {\textcolor{grn}{\textit{4}}};
\node at (1.0,4.2) {\textcolor{red}{2}};
\node at (1.0,4.6) {\textcolor{red}{4}};
\node at (1.0,5) {\textcolor{grn}{\textit{3}}};
\node at (1.4,4.2) {\textcolor{red}{1}};
\node at (1.4,4.6) {\textcolor{grn}{\textit{2}}};
\node at (1.4,5) {\textcolor{red}{3}};
\node at (1.4,5.4) {\textcolor{grn}{\textit{1}}};
\node at (1.8,4.2) {\textcolor{blue}{\textbf{2}}};
\node at (1.8,4.6) {\textcolor{grn}{\textit{2}}};
\node at (1.8,5) {\textcolor{red}{1}};
\node at (1.8,5.4) {\textcolor{grn}{\textit{2}}};
\node at (2.2,4.2) {\textcolor{blue}{\textbf{3}}};
\node at (2.2,4.6) {\textcolor{grn}{\textit{5}}};
\node at (2.6,4.2) {\textcolor{blue}{\textbf{1}}};
\node at (2.6,4.6) {\textcolor{blue}{\textbf{1}}};
\node at (2.6,5) {\textcolor{grn}{\textit{1}}};
\node at (2.6,5.4) {\textcolor{grn}{\textit{2}}};
\node at (3.0,4.2) {\textcolor{blue}{\textbf{1}}};
\node at (3.0,4.6) {\textcolor{blue}{\textbf{1}}};
\node at (3.0,5) {\textcolor{grn}{\textit{1}}};
\node at (3.4,4.2) {\textcolor{blue}{\textbf{3}}};
\node at (3.4,4.6) {\textcolor{grn}{\textit{2}}};
\node at (3.8,4.2) {\textcolor{blue}{\textbf{2}}};
\node at (3.8,4.6) {\textcolor{grn}{\textit{1}}};
\node at (3.8,5) {\textcolor{grn}{\textit{1}}};
\end{tikzpicture}
\hspace{0.6cm}
\begin{tikzpicture}
\filldraw[draw=red,fill=red] (0,0) rectangle (1.6,0.4);
\filldraw[draw=red,fill=red] (0.4,0.4) rectangle (1.2,0.8);
\filldraw[draw=red,fill=red] (0.4,0.8) rectangle (0.8,1.6);
\filldraw[draw=red,fill=red] (0.8,1.2) rectangle (1.2,2.8);
\filldraw[draw=red,fill=red] (1.2,2.4) rectangle (1.6,3.6);
\filldraw[draw=red,fill=red] (1.6,2.8) rectangle (2,3.2);
\filldraw[draw=red,fill=red] (0,2) rectangle (0.8,2.4);
\filldraw[draw=red,fill=red] (0,2.4) rectangle (0.4,2.8);

\filldraw[draw=grn,fill=grn] (0,1.6) rectangle (0.8,2);
\filldraw[draw=grn,fill=grn] (1.2,1.6) rectangle (1.6,2.4);
\filldraw[draw=grn,fill=grn] (1.6,2) rectangle (2.4,2.8);
\filldraw[draw=grn,fill=grn] (2.4,2.4) rectangle (2.8,2.8);
\filldraw[draw=grn,fill=grn] (2,2.8) rectangle (2.4,3.2);
\filldraw[draw=grn,fill=grn] (1.6,3.2) rectangle (2.8,3.6);
\filldraw[draw=grn,fill=grn] (0.4,3.6) rectangle (3.2,4);
\filldraw[draw=grn,fill=grn] (0.4,2.8) rectangle (1.2,3.6);
\filldraw[draw=grn,fill=grn] (0,2.8) rectangle (0.4,3.2);
\filldraw[draw=grn,fill=grn] (0.4,2.4) rectangle (0.8,2.8);
\filldraw[draw=grn,fill=grn] (3.2,2.8) rectangle (4,3.2);
\filldraw[draw=grn,fill=grn] (3.2,3.2) rectangle (3.6,3.6);
\filldraw[draw=grn,fill=grn] (3.6,2) rectangle (4,2.4);

\filldraw[draw=blue,fill=blue] (1.6,0) rectangle (2,0.4);
\filldraw[draw=blue,fill=blue] (3.6,0) rectangle (4,0.4);
\filldraw[draw=blue,fill=blue] (1.6,0.4) rectangle (4,0.8);
\filldraw[draw=blue,fill=blue] (2,0.8) rectangle (2.4,1.2);
\filldraw[draw=blue,fill=blue] (3.2,0.8) rectangle (3.6,1.2);
\filldraw[draw=blue,fill=blue] (2,1.2) rectangle (3.6,1.6);

\draw[step=0.4cm] (0,0) grid (4,4);

\draw[line width=0.5mm] (-1.2,0) -- (4,0);
\draw[line width=0.5mm] (-1.2,4) -- (4,4);
\draw[line width=0.5mm] (-1.2,0) -- (-1.2,4);
\draw (0,0.4) -- (-1.2,0.4);
\draw (0,0.8) -- (-1.2,0.8);
\draw (0,1.2) -- (-1.2,1.2);
\draw (0,1.6) -- (-1.2,1.6);
\draw (0,2.0) -- (-1.2,2.0);
\draw (0,2.4) -- (-1.2,2.4);
\draw (0,2.8) -- (-1.2,2.8);
\draw (0,3.2) -- (-1.2,3.2);
\draw (0,3.6) -- (-1.2,3.6);
\node at (-0.15,0.2) {\textcolor{blue}{\textbf{1}}};
\node at (-0.45,0.2) {\textcolor{blue}{\textbf{1}}};
\node at (-0.75,0.2) {\textcolor{red}{4}};
\node at (-0.15,0.6) {\textcolor{blue}{\textbf{6}}};
\node at (-0.45,0.6) {\textcolor{red}{2}};
\node at (-0.15,1.0) {\textcolor{blue}{\textbf{1}}};
\node at (-0.45,1.0) {\textcolor{blue}{\textbf{1}}};
\node at (-0.75,1.0) {\textcolor{red}{1}};
\node at (-0.15,1.4) {\textcolor{blue}{\textbf{4}}};
\node at (-0.45,1.4) {\textcolor{red}{2}};
\node at (-0.15,1.8) {\textcolor{grn}{\textit{1}}};
\node at (-0.45,1.8) {\textcolor{red}{1}};
\node at (-0.75,1.8) {\textcolor{grn}{\textit{2}}};
\node at (-0.15,2.2) {\textcolor{grn}{\textit{1}}};
\node at (-0.45,2.2) {\textcolor{grn}{\textit{3}}};
\node at (-0.75,2.2) {\textcolor{red}{3}};
\node at (-0.15,2.6) {\textcolor{grn}{\textit{3}}};
\node at (-0.45,2.6) {\textcolor{red}{2}};
\node at (-0.75,2.6) {\textcolor{grn}{\textit{1}}};
\node at (-1.05,2.6) {\textcolor{red}{1}};
\node at (-0.15,3.0) {\textcolor{grn}{\textit{2}}};
\node at (-0.45,3.0) {\textcolor{grn}{\textit{1}}};
\node at (-0.75,3.0) {\textcolor{red}{2}};
\node at (-1.05,3.0) {\textcolor{grn}{\textit{3}}};
\node at (-0.15,3.4) {\textcolor{grn}{\textit{1}}};
\node at (-0.45,3.4) {\textcolor{grn}{\textit{3}}};
\node at (-0.75,3.4) {\textcolor{red}{1}};
\node at (-1.05,3.4) {\textcolor{grn}{\textit{2}}};
\node at (-0.15,3.8) {\textcolor{grn}{\textit{7}}};

\draw[line width=0.5mm] (0,0) -- (0,5.6);
\draw[line width=0.5mm] (4,0) -- (4,5.6);
\draw[line width=0.5mm] (0,5.6) -- (4,5.6);
\draw (0.4,4) -- (0.4,5.6);
\draw (0.8,4) -- (0.8,5.6);
\draw (1.2,4) -- (1.2,5.6);
\draw (1.6,4) -- (1.6,5.6);
\draw (2.0,4) -- (2.0,5.6);
\draw (2.4,4) -- (2.4,5.6);
\draw (2.8,4) -- (2.8,5.6);
\draw (3.2,4) -- (3.2,5.6);
\draw (3.6,4) -- (3.6,5.6);
\node at (0.2,4.2) {\textcolor{red}{1}};
\node at (0.2,4.6) {\textcolor{grn}{\textit{1}}};
\node at (0.2,5) {\textcolor{red}{2}};
\node at (0.2,5.4) {\textcolor{grn}{\textit{1}}};
\node at (0.6,4.2) {\textcolor{red}{4}};
\node at (0.6,4.6) {\textcolor{grn}{\textit{1}}};
\node at (0.6,5) {\textcolor{red}{1}};
\node at (0.6,5.4) {\textcolor{grn}{\textit{4}}};
\node at (1.0,4.2) {\textcolor{red}{2}};
\node at (1.0,4.6) {\textcolor{red}{4}};
\node at (1.0,5) {\textcolor{grn}{\textit{3}}};
\node at (1.4,4.2) {\textcolor{red}{1}};
\node at (1.4,4.6) {\textcolor{grn}{\textit{2}}};
\node at (1.4,5) {\textcolor{red}{3}};
\node at (1.4,5.4) {\textcolor{grn}{\textit{1}}};
\node at (1.8,4.2) {\textcolor{blue}{\textbf{2}}};
\node at (1.8,4.6) {\textcolor{grn}{\textit{2}}};
\node at (1.8,5) {\textcolor{red}{1}};
\node at (1.8,5.4) {\textcolor{grn}{\textit{2}}};
\node at (2.2,4.2) {\textcolor{blue}{\textbf{3}}};
\node at (2.2,4.6) {\textcolor{grn}{\textit{5}}};
\node at (2.6,4.2) {\textcolor{blue}{\textbf{1}}};
\node at (2.6,4.6) {\textcolor{blue}{\textbf{1}}};
\node at (2.6,5) {\textcolor{grn}{\textit{1}}};
\node at (2.6,5.4) {\textcolor{grn}{\textit{2}}};
\node at (3.0,4.2) {\textcolor{blue}{\textbf{1}}};
\node at (3.0,4.6) {\textcolor{blue}{\textbf{1}}};
\node at (3.0,5) {\textcolor{grn}{\textit{1}}};
\node at (3.4,4.2) {\textcolor{blue}{\textbf{3}}};
\node at (3.4,4.6) {\textcolor{grn}{\textit{2}}};
\node at (3.8,4.2) {\textcolor{blue}{\textbf{2}}};
\node at (3.8,4.6) {\textcolor{grn}{\textit{1}}};
\node at (3.8,5) {\textcolor{grn}{\textit{1}}};

\node at (0.2,0.2) {\textcolor{white}{R}};
\node at (0.6,0.2) {\textcolor{white}{R}};
\node at (1,0.2) {\textcolor{white}{R}};
\node at (1.4,0.2) {\textcolor{white}{R}};
\node at (0.6,0.6) {\textcolor{white}{R}};
\node at (1,0.6) {\textcolor{white}{R}};
\node at (0.6,1) {\textcolor{white}{R}};
\node at (0.6,1.4) {\textcolor{white}{R}};
\node at (1,1.4) {\textcolor{white}{R}};
\node at (1,1.8) {\textcolor{white}{R}};
\node at (0.2,2.2) {\textcolor{white}{R}};
\node at (0.6,2.2) {\textcolor{white}{R}};
\node at (1,2.2) {\textcolor{white}{R}};
\node at (0.2,2.6) {\textcolor{white}{R}};
\node at (1,2.6) {\textcolor{white}{R}};
\node at (1.4,2.6) {\textcolor{white}{R}};
\node at (1.4,3) {\textcolor{white}{R}};
\node at (1.8,3) {\textcolor{white}{R}};
\node at (1.4,3.4) {\textcolor{white}{R}};

\node at (0.2,1.8) {\textcolor{white}{G}};
\node at (0.6,1.8) {\textcolor{white}{G}};
\node at (1.4,1.8) {\textcolor{white}{G}};
\node at (1.4,2.2) {\textcolor{white}{G}};
\node at (1.8,2.2) {\textcolor{white}{G}};
\node at (2.2,2.2) {\textcolor{white}{G}};
\node at (3.8,2.2) {\textcolor{white}{G}};
\node at (0.6,2.6) {\textcolor{white}{G}};
\node at (1.8,2.6) {\textcolor{white}{G}};
\node at (2.2,2.6) {\textcolor{white}{G}};
\node at (2.6,2.6) {\textcolor{white}{G}};
\node at (0.2,3) {\textcolor{white}{G}};
\node at (0.6,3) {\textcolor{white}{G}};
\node at (1,3) {\textcolor{white}{G}};
\node at (2.2,3) {\textcolor{white}{G}};
\node at (3.4,3) {\textcolor{white}{G}};
\node at (3.8,3) {\textcolor{white}{G}};
\node at (0.6,3.4) {\textcolor{white}{G}};
\node at (1,3.4) {\textcolor{white}{G}};
\node at (1.8,3.4) {\textcolor{white}{G}};
\node at (2.2,3.4) {\textcolor{white}{G}};
\node at (2.6,3.4) {\textcolor{white}{G}};
\node at (3.4,3.4) {\textcolor{white}{G}};
\node at (0.6,3.8) {\textcolor{white}{G}};
\node at (1,3.8) {\textcolor{white}{G}};
\node at (1.4,3.8) {\textcolor{white}{G}};
\node at (1.8,3.8) {\textcolor{white}{G}};
\node at (2.2,3.8) {\textcolor{white}{G}};
\node at (2.6,3.8) {\textcolor{white}{G}};
\node at (3,3.8) {\textcolor{white}{G}};

\node at (1.8,0.2) {\textcolor{white}{B}};
\node at (3.8,0.2) {\textcolor{white}{B}};
\node at (1.8,0.6) {\textcolor{white}{B}};
\node at (2.2,0.6) {\textcolor{white}{B}};
\node at (2.6,0.6) {\textcolor{white}{B}};
\node at (3,0.6) {\textcolor{white}{B}};
\node at (3.4,0.6) {\textcolor{white}{B}};
\node at (3.8,0.6) {\textcolor{white}{B}};
\node at (2.2,1) {\textcolor{white}{B}};
\node at (3.4,1) {\textcolor{white}{B}};
\node at (2.2,1.4) {\textcolor{white}{B}};
\node at (2.6,1.4) {\textcolor{white}{B}};
\node at (3,1.4) {\textcolor{white}{B}};
\node at (3.4,1.4) {\textcolor{white}{B}};
\end{tikzpicture}
\caption{An example of a Nonogram Color puzzle (left) and its solution (right) (Numbers in red, green, and blue are in normal text, italic, and boldfaced, respectively. Cells in red, green, and blue are marked by letters R, G, and B, respectively.)}
\label{fig00}
\end{figure}

\subsection{Zero-Knowledge Proof}
First introduced in 1989 by Goldwasser et al. \cite{zkp0}, a ZKP protocol is an interactive protocol between a prover $P$ and a verifier $V$, where both of them are given an instance $x$ of a computational problem. Only $P$ knows a solution $w$ of $x$, and the computational power of $V$ is so limited that he/she cannot obtain $w$ from $x$. A ZKP protocol enables $P$ to convince $V$ that he/she knows $w$ without revealing any information about $w$ to $V$. Such protocol has to satisfy the following three properties.

\begin{enumerate}
	\item \textbf{Completeness:} If $P$ knows $w$, then $V$ accepts with high probability. (In this paper, we consider only the \textit{perfect completeness} property where $V$ always accepts.)
	\item \textbf{Soundness:} If $P$ does not know $w$, then $V$ always reject, except with a small probability called \textit{soundness error}. (In this paper, we consider only the \textit{perfect soundness} property where the soundness error is zero.)
	\item \textbf{Zero-knowledge:} $V$ cannot obtain any information about $w$, i.e. there exists a probabilistic polynomial time algorithm $S$ (called a \textit{simulator}), not knowing $w$ but having an access to $V$, such that the outputs of $S$ follow the same probability distribution as the ones of the actual protocol.
\end{enumerate}

As there exists a ZKP protocol for every NP problem \cite{zkp}, it is possible to construct a computational ZKP protocol for Nonogram. However, such construction requires cryptographic primitives and thus is not intuitive or practical.

Instead, we aim to develop a physical ZKP protocol for Nonogram using a deck of playing cards. Card-based protocols have benefits that they use only portable objects found in everyday life without requiring computers. Moreover, these protocols are easy to understand and verify the correctness and security, even for non-expert in cryptography. Hence, they can be used for didactic purpose.

\subsection{Related Work}
In 2009, Gradwohl et al. \cite{sudoku0} developed a card-based ZKP protocol for Sudoku, the first of its kind for any pencil puzzle. However, each of their several variants of the protocol either has a nonzero soundness error or requires special tools. Sasaki et al. \cite{sudoku} later constructed a ZKP protocol for Sudoku that achieves perfect soundness without using special tools. Ruangwises \cite{sudoku2} also developed another ZKP protocol for Sudoku that can be implemented using a deck of all different cards with no duplicates.

The second card-based ZKP protocol for a pencil puzzle was the one for Nonogram, developed by Chien and Hon \cite{nonogram} in 2010. Their protocol, however, requires scratch-off cards and a sealing machine, which is difficult to find in everyday life, making it very impractical. Another drawback of their protocol is that it has a nonzero soundness error. In fact, the error is as high as 6/7, which means the protocol has to be repeated for many times until the soundness error becomes reasonably low.

Since then, the area of card-based ZKP protocols has been extensively studied by many researchers. Besides Sudoku and Nonogram, such protocols for many other pencil puzzles have been proposed so far: ABC End View \cite{abc}, Akari \cite{akari}, Bridges \cite{bridges}, Heyawake \cite{nurikabe}, Hitori \cite{nurikabe}, Juosan \cite{takuzu}, Kakuro \cite{akari,kakuro}, KenKen \cite{akari}, Makaro \cite{makaro,makaro2}, Masyu \cite{slitherlink}, Norinori \cite{norinori}, Numberlink \cite{numberlink}, Nurikabe \cite{nurikabe}, Nurimisaki \cite{nurimisaki}, Ripple Effect \cite{ripple}, Shikaku \cite{shikaku}, Slitherlink \cite{slitherlink}, Suguru \cite{suguru}, Takuzu \cite{akari,takuzu}, and Usowan \cite{usowan}. Except for the ones in \cite{akari}, all subsequent protocols have perfect soundness and do not require special tools.

\subsection{Our Contribution}
Although Nonogram is the second pencil puzzle after Sudoku to have a card-based ZKP protocol, it still lacks a protocol with perfect soundness, or a practical one that does not require special tools. The problem of developing either such protocol has remained open for more than ten years.

In this paper, we solve both problems by developing a card-based ZKP protocol for Nonogram with perfect completeness and perfect soundness, using only regular paper cards. Our protocol uses $\Theta(mn)$ cards and $\Theta(mn)$ shuffles in an $m \times n$ Nonogram puzzle.

We also show how to modify our protocol to make it support Nonogram Color as well. Our modified protocol uses $\Theta(mnp)$ cards and $\Theta(mn)$ shuffles in an $m \times n$ Nonogram Color puzzle with $p$ colors (including white).

\section{Preliminaries}
\subsection{Cards}
Each card used in our protocol either has \mybox{$\clubsuit$}, \mybox{$\heartsuit$}, \mybox{$\spadesuit$}, or \mybox{$\diamondsuit$} as front side. All cards have indistinguishable back sides denoted by \mybox{?}.

\subsection{Random Cut} \label{rc}
Given a sequence $S$ of $k$ cards, a \textit{random cut} shifts $S$ by a uniformly random cyclic shift unknown to all parties. It can be implemented by letting all parties take turns to apply a \textit{Hindu cut} (taking several cards from the bottom of the pile and putting them on the top) to $S$ \cite{hindu}.

\subsection{Pile-Shifting Shuffle} \label{pss}
Given an $\ell \times k$ matrix $M$ of cards, a \textit{pile-shifting shuffle} \cite{polygon} shifts the columns of $M$ by a uniformly random cyclic shift unknown to all parties. It can be implemented by putting all cards in each column into an envelope and applying the random cut to the sequence of envelopes.

\subsection{Copy Protocol} \label{copy}
Given an input sequence of two face-down cards, which is either \hbox{\mybox{$\clubsuit$}\mybox{$\heartsuit$}} or \hbox{\mybox{$\heartsuit$}\mybox{$\clubsuit$}}, a \textit{copy protocol} \cite{mizuki} enables $P$ to produce an additional copy of the input sequence without revealing it to $V$. It also verifies to $V$ that the input sequence is indeed either \mybox{$\clubsuit$}\mybox{$\heartsuit$} or \mybox{$\heartsuit$}\mybox{$\clubsuit$} (not \mybox{$\clubsuit$}\mybox{$\clubsuit$} or \mybox{$\heartsuit$}\mybox{$\heartsuit$}).

\begin{figure}[H]
\centering
\begin{tikzpicture}
\node at (0.0,1.4) {\mybox{?}};
\node at (0.6,1.4) {\mybox{?}};

\node at (0.0,0.7) {\mybox{$\clubsuit$}};
\node at (0.6,0.7) {\mybox{$\heartsuit$}};

\node at (0.0,0) {\mybox{$\clubsuit$}};
\node at (0.6,0) {\mybox{$\heartsuit$}};
\end{tikzpicture}
\caption{The matrix $M$ constructed in Step 1 of the copy protocol}
\label{fig2}
\end{figure}

In the copy protocol, $P$ performs the following steps.
\begin{enumerate}
	\item Construct the following $3 \times k$ matrix $M$ (see Fig. \ref{fig2}).
	\begin{enumerate}
		\item In the first row, place the input sequence.
		\item In the second row and third row, publicly place a face-up sequence \hbox{\mybox{$\clubsuit$}\mybox{$\heartsuit$}}.
	\end{enumerate}
	\item Turn over all face-up cards and apply the pile-shifting shuffle to $M$.
	\item Turn over all cards in the first row of $M$. If the revealed sequence is \hbox{\mybox{$\clubsuit$}\mybox{$\heartsuit$}}, do nothing; if the sequence is \hbox{\mybox{$\heartsuit$}\mybox{$\clubsuit$}}, swap the two columns of $M$. (If the sequence is anything else, then $V$ rejects.)
	\item The sequences in the second and third rows of $M$ will be the two copies of the input sequence as desired.
\end{enumerate}

\subsection{Chosen Cut Protocol} \label{chosen}
Given a sequence of $k$ face-down cards $A = (a_1,a_2,\dots,a_k)$, a \textit{chosen cut protocol} \cite{koch} enables $P$ to select a card $a_i$ he/she desires without revealing $i$ to $V$.

\begin{figure}[H]
\centering
\begin{tikzpicture}
\node at (0.0,1.4) {\mybox{?}};
\node at (0.6,1.4) {\mybox{?}};
\node at (1.2,1.4) {\dots};
\node at (1.8,1.4) {\mybox{?}};
\node at (2.4,1.4) {\mybox{?}};
\node at (3.0,1.4) {\mybox{?}};
\node at (3.6,1.4) {\dots};
\node at (4.2,1.4) {\mybox{?}};

\node at (0.0,1) {$a_1$};
\node at (0.6,1) {$a_2$};
\node at (1.8,1) {$a_{i-1}$};
\node at (2.4,1) {$a_i$};
\node at (3.0,1) {$a_{i+1}$};
\node at (4.2,1) {$a_k$};

\node at (0.0,0.4) {\mybox{?}};
\node at (0.6,0.4) {\mybox{?}};
\node at (1.2,0.4) {\dots};
\node at (1.8,0.4) {\mybox{?}};
\node at (2.4,0.4) {\mybox{?}};
\node at (3.0,0.4) {\mybox{?}};
\node at (3.6,0.4) {\dots};
\node at (4.2,0.4) {\mybox{?}};

\node at (0.0,0) {$\heartsuit$};
\node at (0.6,0) {$\heartsuit$};
\node at (1.8,0) {$\heartsuit$};
\node at (2.4,0) {$\clubsuit$};
\node at (3.0,0) {$\heartsuit$};
\node at (4.2,0) {$\heartsuit$};
\end{tikzpicture}
\caption{The matrix $M$ constructed in Step 1 of the chosen cut protocol}
\label{fig3}
\end{figure}

In the chosen cut protocol, $P$ performs the following steps.
\begin{enumerate}
	\item Construct the following $2 \times k$ matrix $M$ (see Fig. \ref{fig3}).
	\begin{enumerate}
		\item In the first row, place the input sequence $A$.
		\item In the second row, secretly place a face-down \mybox{$\clubsuit$} at the $i$-th column, and a face-down \mybox{$\heartsuit$} at each of the rest of columns.
	\end{enumerate}
	\item Apply the pile-shifting shuffle to $M$.
	\item Turn over all cards in the second row of $M$. Locate the position of the only \mybox{$\clubsuit$}. A card in the first row directly above this card will be the card $a_i$ as desired.
\end{enumerate}

\section{Protocol for Nonogram} \label{main}
On each cell in the Nonogram grid, $P$ secretly places a face-down sequence \hbox{\mybox{$\clubsuit$}\mybox{$\heartsuit$}} if the cell is black or \hbox{\mybox{$\heartsuit$}\mybox{$\clubsuit$}} if the cell is white according to $P$'s solution. Then, $P$ publicly applies the copy protocol to the sequence on each cell to produce an additional copy of it. Each of the two copies will be used to verify a row and a column the cell belongs to. Note that the copy protocol also verifies that the sequence on each cell is in a correct format (either \hbox{\mybox{$\clubsuit$}\mybox{$\heartsuit$}} or \hbox{\mybox{$\heartsuit$}\mybox{$\clubsuit$}}).

From now on, we will show the verification of a row $R$ with $n$ cells that has a sequence $(x_1,x_2,\dots,x_k)$ assigned to it. The verification of a column works analogously.

For every cell in $R$, $P$ selects only the left card from the sequence on it (which is a \mybox{$\clubsuit$} if the cell is black and a \mybox{$\heartsuit$} if the cell is white). $P$ then arranges the selected cards as a sequence $S = (a_1,a_2,\dots,a_n)$, with each card in $S$ corresponding to each cell in $R$ in this order from left to right. As $R$ may start and end with a white or black cell, $P$ publicly appends two face-down \mybox{$\heartsuit$}s, called $a_0$ and $a_{n+1}$, at the beginning and the end of $S$, respectively ($S$ now has length $n+2$). This is to ensure that $S$ must start and end with a \mybox{$\heartsuit$}.

Finally, $P$ publicly appends a face-down ``marking card'' \mybox{$\diamondsuit$}, called $a_{n+2}$, at the end of $S$ ($S$ now has length $n+3$). This is to mark the beginning and the end of $S$ after $S$ has been shifted cyclically several times thoughout the protocol. See Fig. \ref{figM1}.

\begin{figure}[H]
\centering
\begin{tikzpicture}
\node at (0.0,0.5) {\mybox{$\heartsuit$}};
\node at (0.6,0.5) {\mybox{$\clubsuit$}};
\node at (1.2,0.5) {\mybox{$\clubsuit$}};
\node at (1.8,0.5) {\mybox{$\heartsuit$}};
\node at (2.4,0.5) {\mybox{$\heartsuit$}};
\node at (3.0,0.5) {\mybox{$\heartsuit$}};
\node at (3.6,0.5) {\mybox{$\clubsuit$}};
\node at (4.2,0.5) {\mybox{$\clubsuit$}};
\node at (4.8,0.5) {\mybox{$\heartsuit$}};
\node at (5.4,0.5) {\mybox{$\heartsuit$}};
\node at (6.0,0.5) {\mybox{$\clubsuit$}};
\node at (6.6,0.5) {\mybox{$\heartsuit$}};
\node at (7.2,0.5) {\mybox{$\diamondsuit$}};

\node at (0.0,0) {$a_0$};
\node at (0.6,0) {$a_1$};
\node at (1.2,0) {$a_2$};
\node at (1.8,0) {$a_3$};
\node at (2.4,0) {$a_4$};
\node at (3.0,0) {$a_5$};
\node at (3.6,0) {$a_6$};
\node at (4.2,0) {$a_7$};
\node at (4.8,0) {$a_8$};
\node at (5.4,0) {$a_9$};
\node at (6.05,0) {$a_{10}$};
\node at (6.7,0) {$a_{11}$};
\node at (7.3,0) {$a_{12}$};
\end{tikzpicture}
\caption{A sequence $S$ representing the third row of the solution in Fig \ref{fig0}}
\label{figM1}
\end{figure}

The verification is divided into the following three phases.

\subsection{Phase 1: Counting Blocks of Black Cells}
Currently, $S$ contains $k$ blocks of consecutive \mybox{$\clubsuit$}s. In this phase, $P$ will reveal the length of each block, then replace all \mybox{$\clubsuit$}s in $S$ with \mybox{$\spadesuit$}s.

$P$ performs the following steps for $k$ iterations. In the $i$-th iteration,

\begin{enumerate}
	\item Apply the chosen cut protocol to $S$ to select a card corresponding to the leftmost cell of the $i$-th leftmost block of black cells in $R$ (the block with length $x_i$). Let $a_j$ denote the selected card.
	\item Turn over cards $a_j,a_{j+1},a_{j+2},\dots,a_{j+x_i-1}$ (where the indices are taken modulo $n+3$) to reveal that they are all \mybox{$\clubsuit$}s. Otherwise, $V$ rejects.
	\item Turn over cards $a_{j-1}$ and $a_{j+x_i}$ (where the indices are taken modulo $n+3$) to reveal that they are both \mybox{$\heartsuit$}s. Otherwise, $V$ rejects.
	\item Replace every face-up \mybox{$\clubsuit$} with a face-up \mybox{$\spadesuit$}. This is to mark that this block of black cells has already been verified.
	\item Turn over all face-up cards.
\end{enumerate}

After $k$ iterations, $V$ is convinced that $R$ contains at least $k$ different blocks of black cells with lengths $x_1,x_2,\dots,x_k$, but does not know the order of these blocks, or whether $R$ contains any additional black cells besides the ones in these $k$ blocks. Also, all \mybox{$\clubsuit$}s in $S$ have been replaced with \mybox{$\spadesuit$}s. See Fig. \ref{figM2}.

\begin{figure}[H]
\centering
\begin{tikzpicture}
\node at (0.0,0.5) {\mybox{$\heartsuit$}};
\node at (0.6,0.5) {\mybox{$\spadesuit$}};
\node at (1.2,0.5) {\mybox{$\spadesuit$}};
\node at (1.8,0.5) {\mybox{$\heartsuit$}};
\node at (2.4,0.5) {\mybox{$\heartsuit$}};
\node at (3.0,0.5) {\mybox{$\heartsuit$}};
\node at (3.6,0.5) {\mybox{$\spadesuit$}};
\node at (4.2,0.5) {\mybox{$\spadesuit$}};
\node at (4.8,0.5) {\mybox{$\heartsuit$}};
\node at (5.4,0.5) {\mybox{$\heartsuit$}};
\node at (6.0,0.5) {\mybox{$\spadesuit$}};
\node at (6.6,0.5) {\mybox{$\heartsuit$}};
\node at (7.2,0.5) {\mybox{$\diamondsuit$}};
\end{tikzpicture}
\caption{The sequence $S$ from Fig. \ref{figM1} at the end of Phase 1 (in a cyclic rotation)}
\label{figM2}
\end{figure}

\subsection{Phase 2: Removing White Cells}
Currently, $S$ contains $k+1$ blocks of consecutive \mybox{$\heartsuit$}s (including a block at the beginning which contains $a_0$, and a block at the end which contains $a_{n+1}$). In this phase, $P$ will remove some \mybox{$\heartsuit$}s from $S$ such that there will be exactly one remaining \mybox{$\heartsuit$} in each block.

Let $X = x_1+x_2+\dots+x_k$. As there are $n-X$ white cells in $R$, there must be $n-X+2$ \mybox{$\heartsuit$}s in $S$ (including $a_0$ and $a_{n+1}$). $P$ performs the following steps for $(n-X+2)-(k+1) = n-X-k+1$ iterations.

\begin{enumerate}
	\item Apply the chosen cut protocol to $S$ to select any \mybox{$\heartsuit$} such that there are currently at least two remaining \mybox{$\heartsuit$}s in a block it belongs to.
	\item Turn over the selected card to reveal that it is a \mybox{$\heartsuit$}. Otherwise, $V$ rejects.
	\item Remove that card from $S$.
\end{enumerate}

After $n-X-k+1$ iterations, each pair of adjacent blocks of \mybox{$\spadesuit$}s in $S$ are now separated by exactly one \hbox{\mybox{$\heartsuit$},} and there is also a \mybox{$\heartsuit$} before the first block and after the last block ($S$ now has length $X+k+2$). See Fig. \ref{figM3}.

\begin{figure}[H]
\centering
\begin{tikzpicture}
\node at (0.0,0.5) {\mybox{$\heartsuit$}};
\node at (0.6,0.5) {\mybox{$\spadesuit$}};
\node at (1.2,0.5) {\mybox{$\spadesuit$}};
\node at (1.8,0.5) {\mybox{$\heartsuit$}};
\node at (2.4,0.5) {\mybox{$\spadesuit$}};
\node at (3.0,0.5) {\mybox{$\spadesuit$}};
\node at (3.6,0.5) {\mybox{$\heartsuit$}};
\node at (4.2,0.5) {\mybox{$\spadesuit$}};
\node at (4.8,0.5) {\mybox{$\heartsuit$}};
\node at (5.4,0.5) {\mybox{$\diamondsuit$}};
\end{tikzpicture}
\caption{The sequence $S$ from Fig. \ref{figM1} at the end of Phase 2 (in a cyclic rotation)}
\label{figM3}
\end{figure}

\subsection{Phase 3: Verifying Order of Blocks of Black Cells}
$P$ applies the random cut to $S$, turns over all cards, and shifts the sequence cyclically such that the rightmost card is a \mybox{$\diamondsuit$}.

$V$ verifies that the remaining cards in $S$ consist of one \mybox{$\heartsuit$}, $x_1$ consecutive \hbox{\mybox{$\spadesuit$}s}, one \mybox{$\heartsuit$}, $x_2$ consecutive \hbox{\mybox{$\spadesuit$}s}, \dots, one \mybox{$\heartsuit$}, $x_k$ consecutive \hbox{\mybox{$\spadesuit$}s}, one \mybox{$\heartsuit$}, and one \mybox{$\diamondsuit$} in this order from left to right. Otherwise, $V$ rejects.

$P$ performs the above three phases of verification for every row and column of the grid. If all rows and columns pass the verification, then $V$ accepts.

\subsection{Optimization}
As $P$ only uses one card per cell in the verification of a row and a column it belongs to, a total of two cards per cell are actually used in our protocol. Therefore, duplicating a sequence on each cell at the beginning is not necessary. Instead, if $P$ applies the copy protocol in Section \ref{copy} without putting cards in the third row of $M$ in Step 1(b), the protocol will just verify that the input sequence is in a correct format (either \hbox{\mybox{$\clubsuit$}\mybox{$\heartsuit$}} or \hbox{\mybox{$\heartsuit$}\mybox{$\clubsuit$}}) in Step 3, and will return the input sequence in the second row of $M$ in Step 4. This modified copy protocol uses the same idea as the one developed by Mizuki and Shizuya \cite{verify}.

After verifying that a sequence on each cell is in a correct format, $P$ uses the left card of the sequence to verify a row, and the right card to verify a column the cell belongs to. When verifying a column, the selected card will be a \mybox{$\heartsuit$} if the cell is black and a \mybox{$\clubsuit$} if the cell is white, so we have to treat \mybox{$\clubsuit$} and \mybox{$\heartsuit$} exactly the opposite way throughout the protocol.

After the optimization, our protocol uses $mn+1$ \mybox{$\clubsuit$}s, $mn+\max(m,n)+4$ \mybox{$\heartsuit$}s, $\max(m,n)$ \mybox{$\spadesuit$}s, and one \mybox{$\diamondsuit$}, resulting in a total of $2mn+2\max(m,n)+6 = \Theta(mn)$ cards. The protocol also uses $mn+2m+2n+2w = \Theta(mn)$ shuffles, where $w$ is the total number of white cells in the grid.

\section{Security Proof of Protocol for Nonogram}
We will prove the perfect completeness, perfect soundness, and zero-knowledge properties of the protocol for Nonogram.

\begin{lemma}[Perfect Completeness] \label{lem1}
If $P$ knows a solution of the Nonogram puzzle, then $V$ always accepts.
\end{lemma}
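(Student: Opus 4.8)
The plan is to trace through the protocol assuming $P$ holds a valid solution, and verify that each step where $V$ could reject is passed. First I would observe that the initial placement is legitimate: since the solution is valid, $P$ places \mybox{$\clubsuit$}\mybox{$\heartsuit$} on black cells and \mybox{$\heartsuit$}\mybox{$\clubsuit$} on white cells, and since every such sequence is correctly formatted, the copy protocol (resp. the modified verify-only protocol in the optimization) never causes a rejection and outputs exactly what it should. Then I would fix an arbitrary row $R$ with clue $(x_1,\dots,x_k)$, build the sequence $S = (a_0,a_1,\dots,a_{n+1},a_{n+2})$ as described, and note that because the coloring of $R$ is valid, $S$ consists of exactly $k$ maximal blocks of \mybox{$\clubsuit$}s of lengths $x_1,\dots,x_k$ in this left-to-right order, each flanked by at least one \mybox{$\heartsuit$}, with the padding \mybox{$\heartsuit$}s $a_0,a_{n+1}$ ensuring the flanking holds at the ends, and the marker $a_{n+2}$ at the right.

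Next I would check Phase 1 round by round. In round $i$, $P$ uses the chosen cut protocol to pick the leftmost card of the $i$-th block; by the correctness of the chosen cut protocol this selection is always possible and reveals the correct card $a_j$. Since the coloring is valid, $a_j,\dots,a_{j+x_i-1}$ are genuinely $x_i$ consecutive \mybox{$\clubsuit$}s and $a_{j-1},a_{j+x_i}$ are genuinely \mybox{$\heartsuit$}s (the index arithmetic mod $n+3$ is harmless because no block wraps past the marker), so Steps 2 and 3 pass without rejection; Step 4 merely relabels. After $k$ rounds every \mybox{$\clubsuit$} has become a \mybox{$\spadesuit$}. For Phase 2 I would count: there are $n-X+2$ many \mybox{$\heartsuit$}s distributed among $k+1$ blocks (counting the two end blocks), each block nonempty, so for $n-X-k+1$ rounds $P$ can always find a block with $\ge 2$ remaining \mybox{$\heartsuit$}s, select one via the chosen cut protocol, reveal it (it is a \mybox{$\heartsuit$}, so no rejection), and delete it; after these rounds each of the $k+1$ blocks has exactly one \mybox{$\heartsuit$} left. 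Finally, in Phase 3, after the random cut and reorientation so that \mybox{$\diamondsuit$} is rightmost, the sequence reads \mybox{$\heartsuit$}, $x_1$ \mybox{$\spadesuit$}s, \mybox{$\heartsuit$}, $\dots$, $x_k$ \mybox{$\spadesuit$}s, \mybox{$\heartsuit$}, \mybox{$\diamondsuit$}, exactly the pattern $V$ checks for, so $V$ does not reject. The same argument applies verbatim to every column (with the roles of \mybox{$\clubsuit$} and \mybox{$\heartsuit$} swapped, per the optimization). Hence $V$ accepts on all rows and columns, i.e. $V$ always accepts.

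There is no real obstacle here — perfect completeness is essentially a bookkeeping check — but the point requiring the most care is the modular index arithmetic in Phase 1: one must confirm that because the marker \mybox{$\diamondsuit$} and the padding \mybox{$\heartsuit$}s are fixed at the ends and $S$ has not yet been cyclically permuted during Phase 1, no black block can "wrap around," so the indices $j-1,\dots,j+x_i$ in Steps 2–3 always land on the intended cards. I would state this explicitly and otherwise keep the proof brief.
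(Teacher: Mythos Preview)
Your proof is correct and follows essentially the same phase-by-phase approach as the paper's own argument. One small correction to your final remark: the chosen cut protocol in each round of Phase~1 \emph{does} cyclically permute $S$ via its pile-shifting shuffle, so the marker and padding \mybox{$\heartsuit$}s are no longer at fixed physical ends; the correct justification is that in the \emph{cyclic} order a block of \mybox{$\clubsuit$}s never crosses the marker \mybox{$\diamondsuit$}, and cyclic shifts preserve this adjacency structure.
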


\begin{proof}
Assume that $P$ knows a solution. Consider the verification of any row $R$.

In each $i$-th iteration during Phase 1, $P$ selects from $S$ a card $a_j$ corresponding to the leftmost cell of the $i$-th leftmost block of black cells in $R$. As that block has length $x_i$ and has never been selected before, the cards $a_j,a_{j+1},a_{j+2},\dots,a_{j+x_i-1}$ must all be \mybox{$\clubsuit$}s, so Step 2 will pass. Also, since there is at least one white cell between two adjacent blocks of black cells (and at least one \mybox{$\heartsuit$} to the left of the leftmost block of \mybox{$\clubsuit$}s and to the right of the rightmost block of \mybox{$\clubsuit$}s), both $a_{j-1}$ and $a_{j+x_i}$ must be \mybox{$\heartsuit$}s, so Step 3 will pass. Thus, Phase 1 of the verification will pass.

At the start of Phase 2, $S$ contains exactly $k+1$ blocks of \mybox{$\heartsuit$}s, which together have a total of $n-X+2$ \mybox{$\heartsuit$}s. In each iteration, $P$ removes one \mybox{$\heartsuit$} from $S$ such that each block still has at least one \mybox{$\heartsuit$}. $P$ can do so as many as $(n-X+2)-(k+1) = n-X-k+1$ times, so Step 2 will pass for all $n-X-k+1$ iterations. Moreover, at the end of Phase 2, there will be exactly $k+1$ remaining \mybox{$\heartsuit$}s in $S$, which means each block contains exactly one \mybox{$\heartsuit$}.

At the start of Phase 3, there is exactly one \mybox{$\heartsuit$} between two adjacent blocks of \hbox{\mybox{$\spadesuit$}s} in $S$ (and also a \mybox{$\heartsuit$} at the beginning and the end of $S$). Also, the blocks of \hbox{\mybox{$\spadesuit$}s} in $S$ are arranged in the same order as the corresponding blocks of black cells in $R$, so the lengths of these blocks must be $x_1,x_2,\dots,x_k$ in this order from left to right. Thus, Phase 3 of the verification will pass.

As the proof holds for the verification of every row (and also of every column analogously), we can conclude that $V$ always accepts.
\end{proof}

\begin{lemma}[Perfect Soundness] \label{lem2}
If $P$ does not know a solution of the Nonogram puzzle, then $V$ always rejects.
\end{lemma}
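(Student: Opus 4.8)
The plan is to prove the contrapositive: I will show that if $V$ accepts, then the face-down sequences $P$ placed on the grid actually encode a valid Nonogram solution. The argument is structured row by row (and symmetrically column by column): I will argue that passing all three phases of the row-$R$ verification forces the pattern of black/white cells in $R$ to match the clue $(x_1,\dots,x_k)$. Combined over all rows and columns, this means the committed coloring is a genuine solution, so $P$ must in fact know one.

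First I would fix a row $R$ and track the multiset of cards in $S$. After the initial copy protocol (which, as noted in the excerpt, verifies each sequence is $\mybox{$\clubsuit$}\mybox{$\heartsuit$}$ or $\mybox{$\heartsuit$}\mybox{$\clubsuit$}$), every cell of $R$ contributes a well-defined $\mybox{$\clubsuit$}$ (black) or $\mybox{$\heartsuit$}$ (white) to $S$, plus the appended $a_0,a_{n+1}=\mybox{$\heartsuit$}$ and the marker $a_{n+2}=\mybox{$\diamondsuit$}$. The key accounting observations are: (i) the chosen cut protocol (Koch--Walzer) is perfectly correct, so in each round $P$ genuinely selects a single card, and a revealed card's value cannot be faked; (ii) $\mybox{$\spadesuit$}$s are introduced only in Phase 1, one for each revealed $\mybox{$\clubsuit$}$, and Phase 1 runs for exactly $k$ rounds; (iii) cards removed in Phase 2 are revealed to be $\mybox{$\heartsuit$}$s, and exactly $n-X-k+1$ of them are removed.

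For Phase 1: in round $i$, the $x_i$ cards $a_j,\dots,a_{j+x_i-1}$ are shown to be $\mybox{$\clubsuit$}$ and the flanking cards $a_{j-1},a_{j+x_i}$ are shown to be $\mybox{$\heartsuit$}$; then those $\mybox{$\clubsuit$}$s become $\mybox{$\spadesuit$}$s. Since each round consumes $x_i$ distinct $\mybox{$\clubsuit$}$s (they cannot coincide with $\mybox{$\spadesuit$}$s from earlier rounds, as $\spadesuit\neq\clubsuit$), after $k$ rounds at least $X=\sum x_i$ cells of $R$ are black. Passing Phase 1 also certifies that $R$ contains $k$ maximal black runs of lengths exactly $x_1,\dots,x_k$ (each flanked by whites), but a priori there could be extra black cells and hence leftover $\mybox{$\clubsuit$}$s. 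The crucial point — and the main obstacle — is to rule this out: I must show no $\mybox{$\clubsuit$}$ survives into Phase 2. This follows from a counting argument on Phase 3. Phase 3 reveals the entire sequence; for $V$ to accept, $S$ must consist only of $\mybox{$\heartsuit$}$s, $\mybox{$\spadesuit$}$s and the single $\mybox{$\diamondsuit$}$ — in particular no $\mybox{$\clubsuit$}$. But $\mybox{$\clubsuit$}$s are never removed and never converted except in Phase 1's $k$ rounds, which convert exactly $X$ of them; so if more than $X$ cells were black, a $\mybox{$\clubsuit$}$ would remain and Phase 3 would fail. Hence exactly $X$ cells of $R$ are black, the $k$ certified runs account for all of them, and no two runs are adjacent — i.e. $R$ matches its clue. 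An analogous count on $\mybox{$\heartsuit$}$s (start with $n-X+2$, remove $n-X-k+1$, leaving $k+1$) together with the Phase 3 reveal forces exactly one $\mybox{$\heartsuit$}$ between consecutive $\mybox{$\spadesuit$}$-blocks and at both ends, and the left-to-right order of block lengths read in Phase 3 must be $x_1,\dots,x_k$; since Phase 2 and the random cut only delete/cyclically rotate and never reorder the surviving cards, this order equals the order of black runs in $R$. Thus $R$ satisfies its clue. Running this over every row and column shows the committed coloring solves the puzzle, completing the contrapositive.
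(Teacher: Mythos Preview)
Your proof is correct and follows essentially the same approach as the paper's: prove the contrapositive, use Phase~1 to certify $k$ distinct maximal black runs of the required lengths, use the absence of any \mybox{$\clubsuit$} in the Phase~3 reveal to rule out extra black cells, and use the fact that the protocol only performs in-place replacements, deletions of \mybox{$\heartsuit$}s, and cyclic shifts (anchored by the \mybox{$\diamondsuit$}) to conclude that the Phase~3 order of \mybox{$\spadesuit$}-blocks equals the order of black runs in $R$. Your version is in fact more explicit than the paper's --- you spell out the card-counting argument and the order-preservation reasoning that the paper leaves implicit --- but one small phrasing slip: the chosen-cut subroutine used in Phase~1 also cyclically shifts $S$, so your ``only delete/cyclically rotate'' claim should cover Phase~1 as well, not just Phase~2 and the final random cut (this does not affect the validity of your argument).
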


\begin{proof}
We will prove the contrapositive of this statement. Assume that $V$ accepts, which means the verification of every row and column passes. Consider the verification of any row $R$.

In each $i$-th iteration during Phase 1, the steps $P$ performs ensure that there exists a block of exactly $x_i$ consecutive black cells in $R$. As all \hbox{\mybox{$\clubsuit$}s} in the blocks $P$ has selected in previous iterations have already been replaced with \mybox{$\spadesuit$}s, this block must be different from the blocks $P$ selected in previous iterations. Thus, $R$ must contain at least $k$ different blocks of black cells with lengths $x_1,x_2,\dots,x_k$ (in some order).

Also, only \mybox{$\heartsuit$}s are removed from $S$ during Phase 2, and there is no remaining \mybox{$\clubsuit$} in $S$ during Phase 3. This implies $R$ contains no other black cells besides the ones in these $k$ blocks.

Furthermore, in Phase 3, the lengths of the blocks of \mybox{$\spadesuit$}s in $S$ are $x_1,x_2,\dots,$ $x_k$ in this order from left to right. Since the blocks of \mybox{$\spadesuit$}s in $S$ are arranged in the same order as the blocks of black cells in $R$, $R$ must contains exactly $k$ blocks of consecutive black cells with lengths $x_1,x_2,\dots,x_k$ in this order from left to right.

As the proof holds for the verification of every row (and also of every column analogously), we can conclude that $P$ knows a solution of the Nonogram puzzle.
\end{proof}

\begin{lemma}[Zero-Knowledge] \label{lem3}
During the verification, $V$ does not obtain any information about $P$'s solution of the Nonogram puzzle.
\end{lemma}

\begin{proof}
To prove the zero-knowledge property, we will construct a simulator $S$ that does not know $P$'s solution, but can simulate all distributions of values that are revealed when cards are turned face-up.

\begin{itemize}
	\item Consider Step 3 of the copy protocol in Section \ref{copy} where cards are turned face-up. The revealed sequence has probability $1/2$ to be each of \hbox{\mybox{$\clubsuit$}\mybox{$\heartsuit$}} and \hbox{\mybox{$\heartsuit$}\mybox{$\clubsuit$}} due to the pile-shifting shuffle in Step 2. Therefore, this step can be simulated by $S$ without knowing $P$'s solution.
	\item Consider Step 3 of the chosen cut protocol in Section \ref{chosen} where cards are turned face-up. The only \mybox{$\clubsuit$} has probability $1/k$ to be at each of the $k$ positions due to the pile-shifting shuffle in Step 2. Therefore, this step can be simulated by $S$ without knowing $P$'s solution.
	\item Consider the verification of each row (resp. column) in the main protocol. There is only one deterministic pattern of cards that are turned face-up in all phases. This pattern solely depends on the sequence $(x_1,x_2,\dots,x_k)$ assigned to that row (resp. column), which is public information, so the whole protocol can be simulated by $S$ without knowing $P$'s solution.
\end{itemize}
\end{proof}

\section{Application to Nonogram Color}
The idea to verify a solution of Nonogram Color is similar to that of Nonogram. However, there are two main issues we have to consider and make modifications to our protocol.

First, in Phase 1, when verifying a block of $x_i$ consecutive cells with the $q_i$-th color, the cells right next to the left and the right of this block may not be white, but can be any color that is not the $q_i$-th color. In particular, $P$ cannot reveal the colors of these two cells (which will leak information about the solution to $V$) but have to show $V$ that they do not have the $q_i$-th color.

Suppose there are $p$ colors used in the puzzle (including white, which is denoted as the first color). For $1 \leq q \leq p$, we define $E_p(q)$ to be a sequence of $p$ consecutive cards where all of them being \mybox{$\heartsuit$}s except the $q$-th leftmost card being a \mybox{$\clubsuit$} (e.g. $E_4(2)$ is \mybox{$\heartsuit$}\mybox{$\clubsuit$}\mybox{$\heartsuit$}\mybox{$\heartsuit$}). We use $E_p(q)$ to encode a cell with the $q$-th color. By encoding the colors this way, $P$ can reveal only the $q$-th card of the sequence to show $V$ that the corresponding cell does not have the $q$-th color without revealing its actual color.

Second, in Phase 2, when removing white cells, we cannot leave one white cell between each adjacent blocks of painted cells (because some adjacent blocks may be right next to each other and do not have any white cell between them to begin with), so we have to remove every white cell from the row in Phase 2. To avoid having adjacent blocks with the same color merge with each other to become one large block, when marking each block as verified in Phase 1, we have to also mark the length of that block.

In the modified protocol, besides \mybox{$\clubsuit$}s and \mybox{$\heartsuit$}s, we also use cards with a number on the front sides (all cards still have indistinguishable back sides). Define $i \circ E_p(q)$ to be a sequence of $p+1$ cards consisting of a card \mybox{$i$} concatenated by $E_p(q)$ (e.g. $3 \circ E_4(2)$ is \hbox{\mybox{3}\mybox{$\heartsuit$}\mybox{$\clubsuit$}\mybox{$\heartsuit$}\mybox{$\heartsuit$}}). In Phase 1, originally a cell with the $q_i$-th color is encoded by $0 \circ E_p(q_i)$. After verifying a block of $x_i$ consecutive cells with the $q_i$-th color, $P$ replaces a sequence on every verified cell with $x_i \circ E_p(q_i)$ (similar to marking with a \mybox{$\spadesuit$} in the original protocol). By marking the cells this way, two adjacent blocks of, say, two green cells and three green cells (with at least one white cell between them) will not be mistakenly interpreted as a single block of five green cells, even after the white cells between them are removed.

\subsection{Modified Subprotocols}
To support the modified protocol, the following two subprotocols can also be applied to a sequence of $k$ stacks of cards (instead of a sequence of $k$ cards), as long as every stack has an equal number of cards.

For the random cut protocol in Section \ref{rc}, if we have a sequence of $k$ stacks with each having $\ell$ cards (instead of a sequence of $k$ cards), we can implement the protocol in exactly the same way as the pile-shifing shuffle in Section \ref{pss} on an $\ell \times k$ matrix.

For the chosen cut protocol in Section \ref{chosen}, if we have a sequence $A=(a_1,a_2,\dots,a_k)$ of $k$ stacks with each having $\ell$ cards (instead of a sequence of $k$ cards), $P$ can implement the protocol in exactly the same way to select a stack $a_i$ he/she desires without revealing $i$ to $V$.

\subsection{Generalized Copy Protocol} \label{copy2}
The following protocol is a generalized version of the copy protocol in Section \ref{copy}.

Given a sequence $E_p(q)$ for some $1 \leq q \leq p$, a \textit{generalized copy protocol} \cite{polygon} enables $P$ to produce an additional copy of the input sequence without revealing the value of $q$ to $V$. It also verifies that the input sequence is in the form $E_p(q)$ for some $1 \leq q \leq p$.

\begin{figure}[H]
\centering
\begin{tikzpicture}
\node at (0.0,1.4) {\mybox{?}};
\node at (0.6,1.4) {\mybox{?}};
\node at (1.2,1.4) {\mybox{?}};
\node at (1.8,1.4) {\dots};
\node at (2.4,1.4) {\mybox{?}};
\node at (3.0,1.4) {\mybox{?}};

\node at (0.0,0.7) {\mybox{$\heartsuit$}};
\node at (0.6,0.7) {\mybox{$\heartsuit$}};
\node at (1.2,0.7) {\mybox{$\heartsuit$}};
\node at (1.8,0.7) {\dots};
\node at (2.4,0.7) {\mybox{$\heartsuit$}};
\node at (3.0,0.7) {\mybox{$\clubsuit$}};

\node at (0.0,0) {\mybox{$\heartsuit$}};
\node at (0.6,0) {\mybox{$\heartsuit$}};
\node at (1.2,0) {\mybox{$\heartsuit$}};
\node at (1.8,0) {\dots};
\node at (2.4,0) {\mybox{$\heartsuit$}};
\node at (3.0,0) {\mybox{$\clubsuit$}};
\end{tikzpicture}
\caption{A $3 \times p$ matrix constructed in Step 2 of the generalized copy protocol}
\label{fig22}
\end{figure}

In the generalized copy protocol, $P$ performs the followingn steps.
\begin{enumerate}
	\item Reverse the input sequence, i.e. make each $i$-th leftmost card become the $i$-th rightmost card. Note that this reversed sequence is $E_p(p+1-q)$.
	\item Construct the following $3 \times p$ matrix $M$ (see Fig. \ref{fig22}).
	\begin{enumerate}
		\item In the first row, place the reversed input sequence obtained from Step 1.
		\item In the second row and third row, publicly place a face-up sequence $E_p(p)$.
	\end{enumerate}
	\item Turn over all face-up cards and apply the pile-shifting shuffle to $M$.
	\item Turn over all cards in the first row of $M$. Shift the columns of $M$ cyclically such that the only \mybox{$\clubsuit$} in the first row moves to the leftmost column.
	\item The sequences in the second and third rows of $M$ will be the two copies of the input sequence as desired.
\end{enumerate}

\section{Protocol for Nonogram Color}
On each cell in the Nonogram Color grid, $P$ secretly places a face-down sequence $E_p(q)$ if the cell has the $q$-th color according to $P$'s solution (recall that white is the first color and thus a white cell is encoded by $E_p(1)$). Then, $P$ publicly applies the generalized copy protocol to the sequence on each cell to produce an additional copy of it. Each of the two copies will be used to verify a row and a column the cell belongs to. Note that the generalized copy protocol also verifies that the sequence on each cell is in a correct format ($E_p(q)$ for some $1 \leq q \leq p$).

From now on, we will show the verification of a row $R$ with $n$ cells that has a sequence $(x_1,x_2,\dots,x_k)$ assigned to it, with each number $x_i$ having the $q_i$-th color. The verification of a column works analogously.

For each cell in $R$, $P$ picks one copy of the sequence on it and stack the cards in that sequence into a single stack (with the leftmost card being the topmost card in the stack), then publicly puts a face-down \mybox{0} on top of the stack (the stack now has $p+1$ cards). $P$ does this for every cell in $R$ to form a sequence of $n$ stacks $S = (a_0,a_1,\dots,a_{n-1})$, where each stack in $S$ corresponding to each cell in $R$ in this order from left to right.

Finally, $P$ publicly appends a face-down ``marking stack'' $-1 \circ E_p(1)$, called $a_n$, at the end of $S$ ($S$ now has length $n+1$). This marking stack functions exactly like a marking card \mybox{$\diamondsuit$} in the original protocol for Nonogram. See Fig. \ref{figN1}.

\begin{figure}[H]
\centering
\begin{tikzpicture}
\node at (0,0.5) {$\mystack{?}$};
\node at (1,0.5) {$\mystack{?}$};
\node at (2,0.5) {$\mystack{?}$};
\node at (3,0.5) {$\mystack{?}$};
\node at (4,0.5) {$\mystack{?}$};
\node at (5,0.5) {$\mystack{?}$};
\node at (6,0.5) {$\mystack{?}$};
\node at (7,0.5) {$\mystack{?}$};
\node at (8,0.5) {$\mystack{?}$};
\node at (9,0.5) {$\mystack{?}$};
\node at (10,0.5) {$\mystack{?}$};

\node at (0,0) {$0 \circ$};
\node at (1,0) {$0 \circ$};
\node at (2,0) {$0 \circ$};
\node at (3,0) {$0 \circ$};
\node at (4,0) {$0 \circ$};
\node at (5,0) {$0 \circ$};
\node at (6,0) {$0 \circ$};
\node at (7,0) {$0 \circ$};
\node at (8,0) {$0 \circ$};
\node at (9,0) {$0 \circ$};
\node at (10,0) {$-1 \circ$};

\node at (0,-0.3) {$E_4(2)$};
\node at (1,-0.3) {$E_4(2)$};
\node at (2,-0.3) {$E_4(2)$};
\node at (3,-0.3) {$E_4(3)$};
\node at (4,-0.3) {$E_4(3)$};
\node at (5,-0.3) {$E_4(3)$};
\node at (6,-0.3) {$E_4(1)$};
\node at (7,-0.3) {$E_4(1)$};
\node at (8,-0.3) {$E_4(1)$};
\node at (9,-0.3) {$E_4(2)$};
\node at (10,-0.3) {$E_4(1)$};
\end{tikzpicture}
\caption{A sequence $S$ representing the fifth row of the solution in Fig \ref{fig00} (White, red, green, and blue are denoted as the first, the second, the third, and the fourth colors, respectively.)}
\label{figN1}
\end{figure}

The verification is divided into the following three phases.

\subsection{Phase 1: Counting Blocks of Painted Cells}
Currently, $R$ contains $k$ blocks of consecutive painted cells. In this phase, $P$ will reveal the length of each block as well as marking it with its length.

$P$ performs the following steps for $k$ iterations. In the $i$-th itration,

\begin{enumerate}
	\item Apply the chosen cut protocol to $S$ to select a stack corresponding to the leftmost cell of the $i$-th leftmost block of painted cells in $R$ (the block with the $q_i$-th color and length $x_i$). Let $a_j$ denote the selected stack.
	\item Turn over all cards in stacks $a_j,a_{j+1},a_{j+2},\dots,a_{j+x_i-1}$ (where the indices are taken modulo $n+1$) to reveal that every stack is $0 \circ E_p(q_i)$, i.e. each corresponding cell has the $q_i$-th color. Otherwise, $V$ rejects.
	\item Turn over only the $(q_i+1)$-th cards of stacks $a_{j-1}$ and $a_{j+x_i}$ (where the indices are taken modulo $n+1$) to reveal that they are both \mybox{$\heartsuit$}s, i.e. the two corresponding cells do not have the $q_i$-th color. Otherwise, $V$ rejects.
	\item In each of the stacks $a_j,a_{j+1},a_{j+2},\dots,a_{j+x_i-1}$, replace the topmost card \mybox{0} with a \mybox{$x_i$}. Turn over all face-up cards. The purpose of this step is to mark that this block has been verified and has length $x_i$.
\end{enumerate}

After $k$ iterations, $V$ is convinced that $R$ contains at least $k$ different blocks of painted cells with lengths $x_1,x_2,\dots,x_k$, but does not know the order of these blocks, or whether $R$ contains any additional painted cells besides the ones in these $k$ blocks. Also, all \mybox{0}s in the corresponding blocks in $S$ have been replaced with cards with positive numbers. See Fig. \ref{figN2}.

\begin{figure}[H]
\centering
\begin{tikzpicture}
\node at (0,0.5) {$\mystack{?}$};
\node at (1,0.5) {$\mystack{?}$};
\node at (2,0.5) {$\mystack{?}$};
\node at (3,0.5) {$\mystack{?}$};
\node at (4,0.5) {$\mystack{?}$};
\node at (5,0.5) {$\mystack{?}$};
\node at (6,0.5) {$\mystack{?}$};
\node at (7,0.5) {$\mystack{?}$};
\node at (8,0.5) {$\mystack{?}$};
\node at (9,0.5) {$\mystack{?}$};
\node at (10,0.5) {$\mystack{?}$};

\node at (0,0) {$3 \circ$};
\node at (1,0) {$3 \circ$};
\node at (2,0) {$3 \circ$};
\node at (3,0) {$3 \circ$};
\node at (4,0) {$3 \circ$};
\node at (5,0) {$3 \circ$};
\node at (6,0) {$0 \circ$};
\node at (7,0) {$0 \circ$};
\node at (8,0) {$0 \circ$};
\node at (9,0) {$1 \circ$};
\node at (10,0) {$-1 \circ$};

\node at (0,-0.3) {$E_4(2)$};
\node at (1,-0.3) {$E_4(2)$};
\node at (2,-0.3) {$E_4(2)$};
\node at (3,-0.3) {$E_4(3)$};
\node at (4,-0.3) {$E_4(3)$};
\node at (5,-0.3) {$E_4(3)$};
\node at (6,-0.3) {$E_4(1)$};
\node at (7,-0.3) {$E_4(1)$};
\node at (8,-0.3) {$E_4(1)$};
\node at (9,-0.3) {$E_4(2)$};
\node at (10,-0.3) {$E_4(1)$};
\end{tikzpicture}
\caption{The sequence $S$ from Fig. \ref{figN1} at the end of Phase 1 (in a cyclic rotation)}
\label{figN2}
\end{figure}

\subsection{Phase 2: Removing White Cells}
In this phase, $P$ will remove all stacks of $0 \circ E_p(1)$ (which correspond to white cells) from $S$.

Let $X = x_1+x_2+\dots+x_k$. $P$ performs the following steps for $n-X$ iterations.

\begin{enumerate}
	\item Apply the chosen cut protocol to $S$ to select any stack of $0 \circ E_p(1)$.
	\item Turn over all cards in that stack to reveal that it is a $0 \circ E_p(1)$. Otherwise, $V$ rejects.
	\item Remove that stack from $S$.
\end{enumerate}

After $n-X$ iterations, all stacks of $0 \circ E_p(1)$ have been removed from $S$ ($S$ now has length $X+1$). See Fig. \ref{figN3}.

\begin{figure}[H]
\centering
\begin{tikzpicture}
\node at (0,0.5) {$\mystack{?}$};
\node at (1,0.5) {$\mystack{?}$};
\node at (2,0.5) {$\mystack{?}$};
\node at (3,0.5) {$\mystack{?}$};
\node at (4,0.5) {$\mystack{?}$};
\node at (5,0.5) {$\mystack{?}$};
\node at (6,0.5) {$\mystack{?}$};
\node at (7,0.5) {$\mystack{?}$};

\node at (0,0) {$3 \circ$};
\node at (1,0) {$3 \circ$};
\node at (2,0) {$3 \circ$};
\node at (3,0) {$3 \circ$};
\node at (4,0) {$3 \circ$};
\node at (5,0) {$3 \circ$};
\node at (6,0) {$1 \circ$};
\node at (7,0) {$-1 \circ$};

\node at (0,-0.3) {$E_4(2)$};
\node at (1,-0.3) {$E_4(2)$};
\node at (2,-0.3) {$E_4(2)$};
\node at (3,-0.3) {$E_4(3)$};
\node at (4,-0.3) {$E_4(3)$};
\node at (5,-0.3) {$E_4(3)$};
\node at (6,-0.3) {$E_4(2)$};
\node at (7,-0.3) {$E_4(1)$};
\end{tikzpicture}
\caption{The sequence $S$ from Fig. \ref{figN1} at the end of Phase 2 (in a cyclic rotation)}
\label{figN3}
\end{figure}

\subsection{Phase 3: Verifying Order of Blocks of Painted Cells}
$P$ applies the random cut to $S$, turns over all cards in all stacks, and shifts the sequence cyclically such that the rightmost stack is $-1 \circ E_p(1)$.

$V$ verifies that the remaining stacks in $S$ are: $x_1$ stacks of $x_1 \circ E_p(q_1)$, $x_2$ stacks of $x_2 \circ E_p(q_2)$, \dots, $x_k$ stacks of $x_k \circ E_p(q_k)$, and one stack of $-1 \circ E_p(1)$ in this order from left to right. Otherwise, $V$ rejects.

$P$ performs the above three phases of verification for every row and column of the grid. If all rows and columns pass the verification, then $V$ accepts.

The modified protocol for Nonogram Color uses $\Theta(mnp)$ cards and $\Theta(mn)$ shuffles.

\section{Security Proof of Protocol for Nonogram Color}
The proofs of perfect completeness, perfect soundness, and zero-knowledge properties of the modified protocol for Nonogram Color are very similar to those of the original protocol for Nonogram. For the sake of completeness, the full proofs are shown in this section.

\begin{lemma}[Perfect Completeness] \label{lem4}
If $P$ knows a solution of the Nonogram Color puzzle, then $V$ always accepts.
\end{lemma}

\begin{proof}
Assume that $P$ knows a solution. Consider the verification of any row $R$.

In each $i$-th iteration during Phase 1, $P$ selects from $S$ a stack $a_j$ corresponding to the leftmost cell of the $i$-th leftmost block of painted cells in $R$. As that block has the $q_i$-th color with length $x_i$, and has never been selected before, the stacks $a_j,a_{j+1},a_{j+2},\dots,a_{j+x_i-1}$ must all be stacks of $0 \circ E_p(q_i)$, so Step 2 will pass. Also, since the cells next to the left and right of this block must have colors different from the $q_i$-th color, the $(q_i+1)$-th cards of stacks $a_{j-1}$ and $a_{j+x_i}$ must both be \hbox{\mybox{$\heartsuit$}s}, so Step 3 will pass. Thus, Phase 1 of the verification will pass.

As $R$ contains exactly $n-X$ white cells, at the start of Phase 2 $S$ contains exactly $n-X$ stacks of $0 \circ E_p(1)$. In each iteration, $P$ removes one $0 \circ E_p(1)$ from $S$. $P$ can do so as many as $n-X$ times, so Step 2 will pass for all $n-X$ iterations. Moreover, at the end of Phase 2, there will be no stack of $0 \circ E_p(1)$ left in $S$.

At the start of Phase 3, there is no stack of $0 \circ E_p(1)$ left in $S$. Also, the blocks of stacks of $x_i \circ E_p(q_i)$ in $S$ are arranged in the same order as the corresponding blocks of painted cells in $R$, so $S$ must consist of blocks of $x_1,x_2,\dots,x_k$ consecutive stacks of $x_1 \circ E_p(q_1), x_2 \circ E_p(q_2), \dots, x_k \circ E_p(q_k)$ in this order from left to right. Thus, Phase 3 of the verification will pass.

As the proof holds for the verification of every row (and also of every column analogously), we can conclude that $V$ always accepts.
\end{proof}

\begin{lemma}[Perfect Soundness] \label{lem5}
If $P$ does not know a solution of the Nonogram Color puzzle, then $V$ always rejects.
\end{lemma}

\begin{proof}
We will prove the contrapositive of this statement. Assume that $V$ accepts, which means the verification of every row and column passes. Consider the verification of any row $R$.

In each $i$-th iteration during Phase 1, the steps $P$ performs ensure that there exists a block of exactly $x_i$ consecutive cells with the $q_i$-th color in $R$. As the topmost \mybox{0}s on the stacks $P$ has selected in previous iterations have already been replaced with cards with positive numbers, this block must be different from the blocks $P$ selected in previous iterations. Thus, $R$ must contain at least $k$ different blocks of painted cells with the $q_1,q_2,\dots,q_k$-th colors (in some order) and lengths $x_1,x_2,\dots,x_k$ (in the same order).

Also, only stacks of $0 \circ E_p(1)$ are removed from $S$ during Phase 2, and there is no remaining stack of $0 \circ E_p(q)$ for $q>1$ in $S$ during Phase 3. This implies $R$ contains no other painted cells besides the ones in these $k$ blocks.

Furthermore, in Phase 3, the lengths of the blocks of \mybox{$\spadesuit$}s in $S$ are $x_1,x_2,\dots,$ $x_k$ in this order from left to right. Since the blocks of \mybox{$\spadesuit$}s in $S$ are arranged in the same order as the blocks of black cells in $R$, $R$ must contains exactly $k$ blocks of consecutive painted cells with the $q_1,q_2,\dots,q_k$-th colors and lengths $x_1,x_2,\dots,x_k$ in this order from left to right.

As the proof holds for the verification of every row (and also of every column analogously), we can conclude that $P$ knows a solution of the Nonogram Color puzzle.
\end{proof}

\begin{lemma}[Zero-Knowledge] \label{lem6}
During the verification, $V$ does not obtain any information about $P$'s solution of the Nonogram Color puzzle.
\end{lemma}

\begin{proof}
To prove the zero-knowledge property, we will construct a simulator $S$ that does not know $P$'s solution, but can simulate all distributions of values that are revealed when cards are turned face-up.

\begin{itemize}
	\item Consider Step 4 of the generalized copy protocol in Section \ref{copy2} where cards are turned face-up. The only \mybox{$\clubsuit$} has probability $1/p$ to be at each of the $p$ positions due to the pile-shifting shuffle in Step 3. Therefore, this step can be simulated by $S$ without knowing $P$'s solution.
	\item Consider Step 3 of the chosen cut protocol in Section \ref{chosen} where cards are turned face-up. The only \mybox{$\clubsuit$} has probability $1/k$ to be at each of the $k$ positions due to the pile-shifting shuffle in Step 2. Therefore, this step can be simulated by $S$ without knowing $P$'s solution.
	\item Consider the verification of each row (resp. column) in the main protocol for Nonogram Color. There is only one deterministic pattern of cards that are turned face-up in all phases. This pattern solely depends on the sequence $(x_1,x_2,\dots,x_k)$ assigned to that row (resp. column) and the colors $q_1,q_2,\dots,q_k$ of the numbers in that sequence, which is public information, so the whole protocol can be simulated by $S$ without knowing $P$'s solution.
\end{itemize}
\end{proof}

\section{Future Work}
We constructed card-based ZKP protocols for Nonogram using $\Theta(mn)$ cards and $\Theta(mn)$ shuffles, and for Nonogram Color using $\Theta(mnp)$ cards and $\Theta(mn)$ shuffles. A possible future work is to improve ZKP protocols for these two puzzles so that they can be implemented using a deck containing all different cards with no duplicates like the ones for Sudoku in \cite{sudoku2} and Makaro in \cite{makaro2}. Other challenging future work includes developing card-based ZKP protocols for other popular pencil puzzles and improving practicalness (type of cards, number of cards, or number of shuffles) of the existing protocols.

\end{document}